\pgfplotsset{compat=newest}
\definecolor{accentcolor}{RGB}{50, 70, 150}
\newenvironment{tablenotes}[1][Note]{\begin{minipage}[t]{\linewidth}\footnotesize{\itshape#1: }}{\end{minipage}}
\providecommand{\U}[1]{\protect\rule{.1in}{.1in}}
\newtheorem*{proposition*}{Proposition}
\newtheorem{proposition}{Proposition}
\theoremstyle{definition}
\crefname{observation}{Observation}{Observations}
\crefname{assumption}{Assumption}{Assumptions}
\crefname{proposition}{Proposition}{Propositions}
\definecolor{ColorRed}{RGB}{238,26,28}
\definecolor{ColorBlue}{RGB}{55,126,184}
\definecolor{ColorGreen}{RGB}{77,215,74}
\colorlet{colav}{ColorRed!70!black}
\colorlet{colneu}{ColorBlue!70!black}
\colorlet{colseek}{ColorGreen!70!black}
\def\Na{\mathbf{N}}
\def\ep{\varepsilon}
\def\one{\mathbf{1}}
\def\ta{\theta}
\def\la{\lambda}
\def\da{\delta}
\def\Re{\mathbf{R}} 
\def\Na{\mathbf{N}}
\def\os{\emptyset}
\newcommand{\df}[1]{\textit{#1}}
\newcommand{\abs}[1]{ \left | #1 \right | }
\newdimen\slantmathcorr
\def\oversl#1{%assuming that mathslant=0.25
\setbox0=\hbox{$#1$}
\slantmathcorr=\wd0
\hskip 0.2\slantmathcorr \overline{\hbox to 0.8\wd0{%
\vphantom{\hbox{$#1$}}}}
\hskip-\wd0\hbox{$#1$}
}
\def\undersl#1{%assuming that mathslant=0.25
\setbox0=\hbox{$#1$}
\slantmathcorr=\wd0
\underline{\hbox to 0.8\wd0{%
\vphantom{\hbox{$#1$}}}}
\hskip-0.8\wd0\hbox{$#1$}
}
\title{\textsc{Top of the Batch: Interviews and the Match}}%\thanks{Echenique gratefully acknowledges the support of NSF SES-1558757 and CNS-1518941. Yariv gratefully acknowledges the support of NSF grant SES-1629613.}}
 \author{\small Federico Echenique\thanks{Division of the Humanities and Social Sciences, Caltech.} \quad \quad Ruy Gonz\'{a}lez\thanks{Caltech.} \quad \quad Alistair Wilson\thanks{Department of Economics, University of Pittsburgh.} \quad \quad Leeat Yariv\thanks{Department of Economics, Princeton University, CEPR, and NBER.}\ \thanks{Echenique gratefully acknowledges the support of NSF SES-1558757 and CNS-1518941. Yariv gratefully acknowledges the support of NSF grant SES-1629613.}}
\date{February, 2020} % Didn't like having the current date, given we finished it on 12th.
\renewenvironment{abstract}
  {\small%\quotation
  {\bfseries\noindent{\abstractname}\par\nobreak\smallskip}}
\date{December, 2020}
\begin{document}
\title{Top of the Batch: Interviews and the Match}
 \maketitle
\begin{abstract}
Most doctors in the NRMP match with one of their most-preferred internship programs. However, surveys indicate doctors' preferences are similar, suggesting a puzzle: how can so many doctors match with their top choices when positions are scarce? We provide one possible explanation. We show that the patterns in the NRMP data may be an artifact of the interview process that precedes the match. Our study highlights the importance of understanding market interactions occurring before and after a matching clearinghouse, and casts doubts on analyses of clearinghouses that take reported preferences at face value.
\end{abstract}

% \begin{quote}
% \textbf{Keywords:} 
% \end{quote}

% \newpage

% \end{abstract}

\thispagestyle{empty}

\vskip 0.2in

\noindent \textbf{Key words:} NRMP, Deferred acceptance, Interviews, First-rank matches
%Mechanism Design, Discounting, Time Lotteries.

\vskip 0.2in

\noindent \textbf{JEL:}
C78, 	%Bargaining Theory • Matching Theory
D47,  %Market Design
% D82 %Asymmetric and Private Information & Mechanism Design
J44	 % Professional Labor Markets • Occupational Licensing ??

\newpage \renewcommand{\thefootnote}{\arabic{footnote}} \pagebreak %
\setcounter{page}{1}
\section{Introduction}

The National Resident Matching Program (NRMP) has matched millions of doctors to residency programs across the United States. In 2020 alone, 45,000 active applicants matched to over 37,000 positions. Match results reported by the NRMP for 2020 suggest comforting news for doctors: 46.3\% of freshly-minted MDs from US schools were matched to their first-ranked choice, while 71.1\% were matched to one of their top-three choices. The most-recent year's figures are by no means an aberration. The fraction of applicants matched to their first-ranked choice has been at least as high over the past two decades. We suggest these surprising figures should not be taken at face value. In particular, we show that interactions outside of the main match---through the interview process that precedes it----may be at least as important as the matching protocol itself. 

Why should a very large fraction of doctors matching to their top-ranked residencies be surprising? The algorithm governing the NRMP match implements a stable matching over the reported preferences. If applicants report similar preferences, only a few applicants can get their most-preferred option. For example, suppose we wish to match 100 prospective residents to 100 positions. Common preferences on both sides (an assortative market) yield an outcome where just \textit{1\% } of doctors are matched to their first-ranked program. As we show, even a small common component in doctors' preferences implies relatively few matches to top-ranked hospitals. 

One explanation for the NRMP outcome data is that applicants' preferences are diametrically opposed, with a handful of applicants ranking each position as their top outcome. This stands in the face of survey data and preference estimations suggesting important preference commonalities \citep[see][]{rees2018suboptimal, agarwal2015match}. Another explanation might be that preferences are independent, or even somewhat correlated, across participants but that each doctor and hospital consider only $k$ of their top partners as acceptable, as in \cite{10.1145/2656202}, and submit those preferences truthfully. Matched participants would then have to receive one of their top-$k$ partners. As we show, this explanation too has shortcomings. First, it does not explain the relative prevalence of matches with the first-ranked partner. Second, for small $k$---which is arguably the case in the NRMP, where doctors commonly rank fewer than $20$ programs---many applicants remain unmatched under truncation to the top-$k$ partners \citep[see also][]{arnosti2015,beyhaghitwo,lee2016incentive}.   

We propose another story. Prior to the NRMP, applicants interview with hospitals. The  determination of who interviews with whom is decentralized with two important features.  First, interviewing is costly, and capacities are limited.  Second, hospitals and doctors submit rankings to the NRMP only for those \textit{they interviewed with}.\footnote{The 2019 NRMP Applicant Survey (available from \url{nrmp.org}) reports on four types of median respondents in 21 specialties (anesthesiology, pediatrics, etc.). Of the 84 medians reported, 63 have perfectly coincident numbers for interviews attended and programs ranked, where 81 are $\pm 1$.
}

We assume that hospitals and prospective residents' preferences are decomposable into common and idiosyncratic components. For hospitals, the common component can reflect doctors' academic performance and test scores \citep{agarwal2015match}. For doctors, it can reflect hospital rankings, quality of life in the local area, etc. In contrast, the idiosyncratic component reflects match-specific values. Assuming this preference form, we consider a pre-match interview-selection process. Each hospital has a maximum number of interview slots, $k$, while each candidate has a limit on the number of interviews they can attend, $k^\prime$. The decentralized interview outcome is then modeled as a stable many-to-many matching under the ($k$,$k^\prime$) capacity constraints. At the centralized matching stage, only interview partners' ranks are reported, which we refer to as the ``interview-truncated'' preferences. 

The truncation induced by the interview process necessarily narrows agents' original preferences. Nonetheless, since hospitals' and doctors' preferences are linked through stability of the interview process, a large fraction of prospective doctors still end up matched. Moreover, \textit{reported} ranks for match outcomes are much higher than in the untruncated preferences. 

The presence of a common component in prospective residents' preferences is crucial for this conclusion. In particular, we show that with sufficient disagreement in doctors' preferences, interviews may cause matched partners' reported rank to go down, not up. While perfect agreement among the doctors over hospital rankings implies that interviews lead to inflated rankings for matched programs, this obviously represents an extreme.\footnote{A related idea appears in \cite{beyhaghitwo}, who show that interviews may increase the size of a match. See also \cite{kadam2015interviewing}.} Our main theoretical finding is that in large markets, an \textit{arbitrarily weak} common component is sufficient for interviews to generate the pattern of high-reported ranks for match partners.

As our most-general result is asymptotic, we complement it with simulations at more moderate market sizes. Not only do our simulation results strongly mirror the NRMP outcome reports (unmatched fraction, distribution of submitted rankings), they also provide a strong link with one of the other main findings in the literature, that of small-cores in \citet{roth1999redesign}.\footnote{In an environment with fully idiosyncratic preferences, \cite{ashlagi2017unbalanced} show that imbalanced markets lead to high reported ranks for the short side of the market---at the aggregate level for the NRMP, the hospital side.}

The idea that doctors' reports in the residency match may not reflect true preferences is certainly present in other work. \cite{10.1257/aer.p20171027} survey evidence of misreports in the NRMP, suggesting four possible explanations: proposers' failure to identify the dominant strategy, mistrust in the mechanism, non-classical utility,
and self-selection. The last of these is closest to the mechanism in our paper. In this vein \cite{CHEN201959} consider school-choice problems where students ``self select'' by only ranking schools they believe will plausibly admit them,  showing evidence for this self-selection in Mexican high-school applications. While doctors and hospitals only ranking those they interview with is a manifestation of self-selection, our theoretical analysis offers a constructive process to shed light on this process and its impact on outcomes.\footnote{\cite{LeeSchwartzInterviews} also consider an interview process that precedes a centralized match. In their setting, workers are fully informed of their preferences, while firms view workers symmetrically at the outset and use costly interviews to infer their own preferences. %\cite{LeeSchwartzInterviews} assume that preferences are revealed truthfully to the centralized clearinghouse and identify features of efficient equilibria at the interview stage. 
In the NRMP context, \cite{rees2018suboptimal} uses surveys to illustrate doctors' significant ``misreporting'' in the match, while \cite{Rees-Jones11471} uses an online experiment with post-match medical students where 23 percent misrepresent their preferences in an incentivized NRMP-like matching task.}

Our results have important implications for the NRMP, and the matching literature more broadly. Doctors participating in the deferred-acceptance algorithm underlying the match have incentives to truthfully report preferences \citep{roth1999redesign}. Traditionally, economists have viewed the NRMP as an ideal case-study in strategy-proof design. Our findings suggest that because reported preferences in the NRMP are filtered through the interview stage, they should be interpreted with caution. In particular, reported high-rank matches cannot be read literally, and any conclusions drawn about welfare using estimated preferences from the match itself are suspect. This message is particularly stark given that our paper ignores strategic effects at interviews.\footnote{See \cite{beyhaghi2017effect} for an analysis of the strategic implications of interviews.} 

\section{Setting Up the Puzzle}
We first argue that standard preference assumptions on the match process are at odds with observed outcomes in the NRMP data. We focus on two key measures. First, across two decades of annual matches, approximately one half of all matched residents obtain their first-ranked outcome.\footnote{See Figure B.1 in the Online Appendix for details.} This holds not only for matched US MD seniors (residents graduating with an MD from a US medical school), but also for matched independent applicants (those from DO-granting schools, or based outside of the US) who match at lower rates. Second, \citeauthor{roth1999redesign} demonstrate that reported rank-order lists exhibit small cores. That is, in a shift from doctor-proposing to hospital-proposing deferred acceptance (DA), they show that only $0.1\%$ of doctors have different outcomes. Taking the reported preferences at face value, this means that the vast majority (99.9\%) have a unique stable-match partner: a small core.
%[[LY: I DON'T FIND THIS EXPLANATION COMPELLING, IT SEEMS MORE ABOUT CORE SIZE. I WROTE THE PARAGRAPH ABOVE INSTEAD, BUT FEEL FREE TO BRING BACK.]]
%One potential explanation...

To gain intuition for which preferences can generate these patterns, consider a two-sided matching market with $N=100$ participants on each side. For each doctor $d$ and hospital $h$, the respective cardinal-match utilities are given by:
\[u_d(h)=\lambda^D\cdot c_h + (1-\lambda^D)\cdot\eta_{d,h}\text{ and } u_h(d)=\lambda^H\cdot c_d + (1-\lambda^H)\cdot\eta_{h,d}. \]
For the $c_h$ ($c_d$) terms we draw i.i.d. $\mathcal{N}(0,1)$ random variables to represent common-utility components of matching with each hospital (doctor). The $\eta$ terms represent idiosyncratic terms, again i.i.d. $\mathcal{N}(0,1)$ random variables. Finally, $\lambda^H$ and  $\lambda^D$  represent the relative weights on the common and idiosyncratic utility components for each side of the market.

We form analogs to the two features of NRMP outcomes using 500 DA simulations with the above preference assumptions.  Figure~\ref{fig:GSsim}(a) indicates the simulated fraction of doctors matched to their first-ranked program under doctor-proposing DA, while Figure~\ref{fig:GSsim}(b) depicts the fraction with the same partner under both doctor- and hospital-proposing DA. On the horizontal axis we vary the hospitals' common-component weight, where each plotted curve varies the doctors' common-component weight.

\begin{figure}[tb]
\begin{center}
\begin{subfigure}{0.49\textwidth}
\includegraphics[width=0.99\textwidth]{./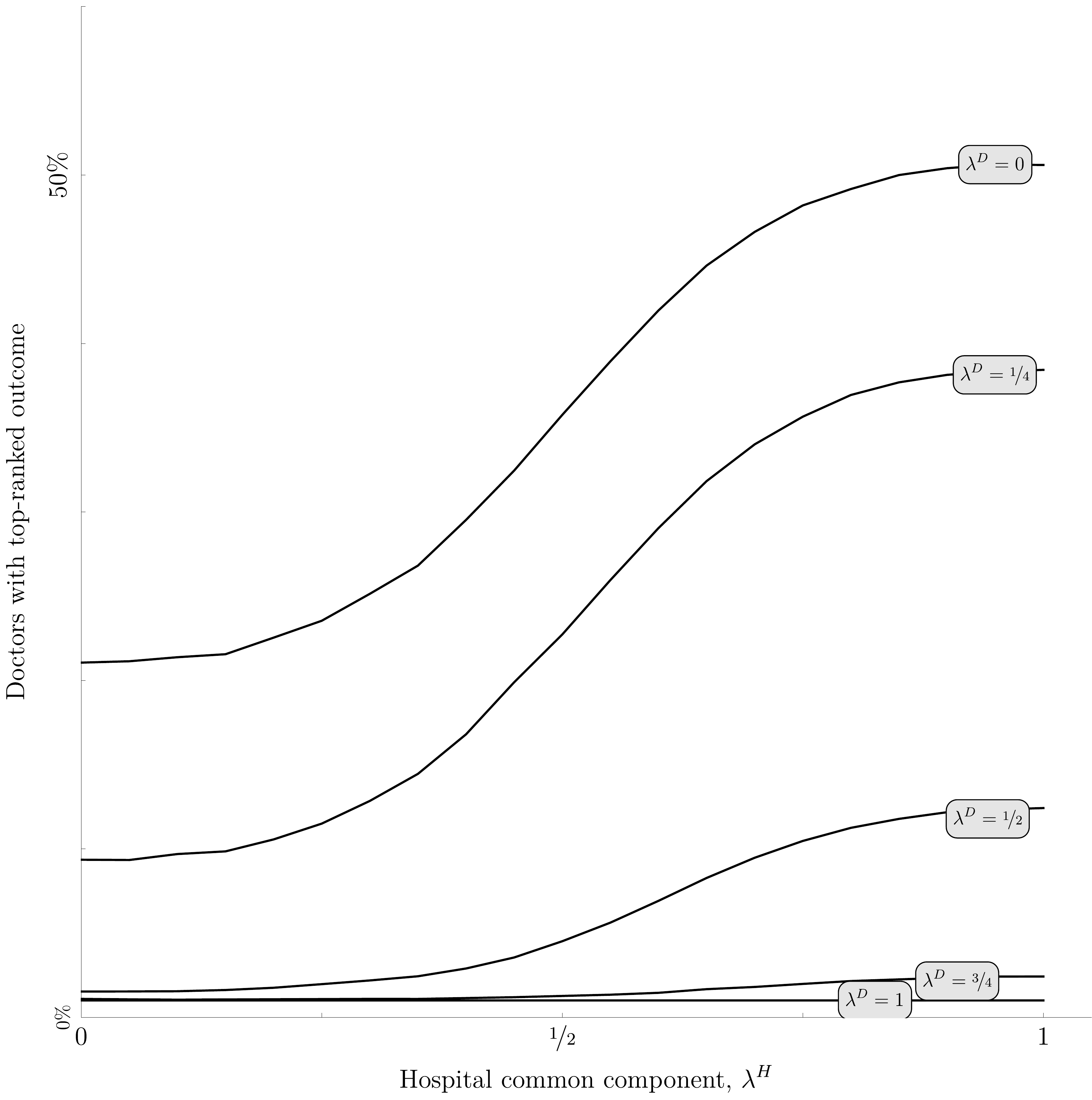}
\caption{Top-ranked matches}
\end{subfigure}\begin{subfigure}{0.49\textwidth}
\includegraphics[width=0.99\textwidth]{./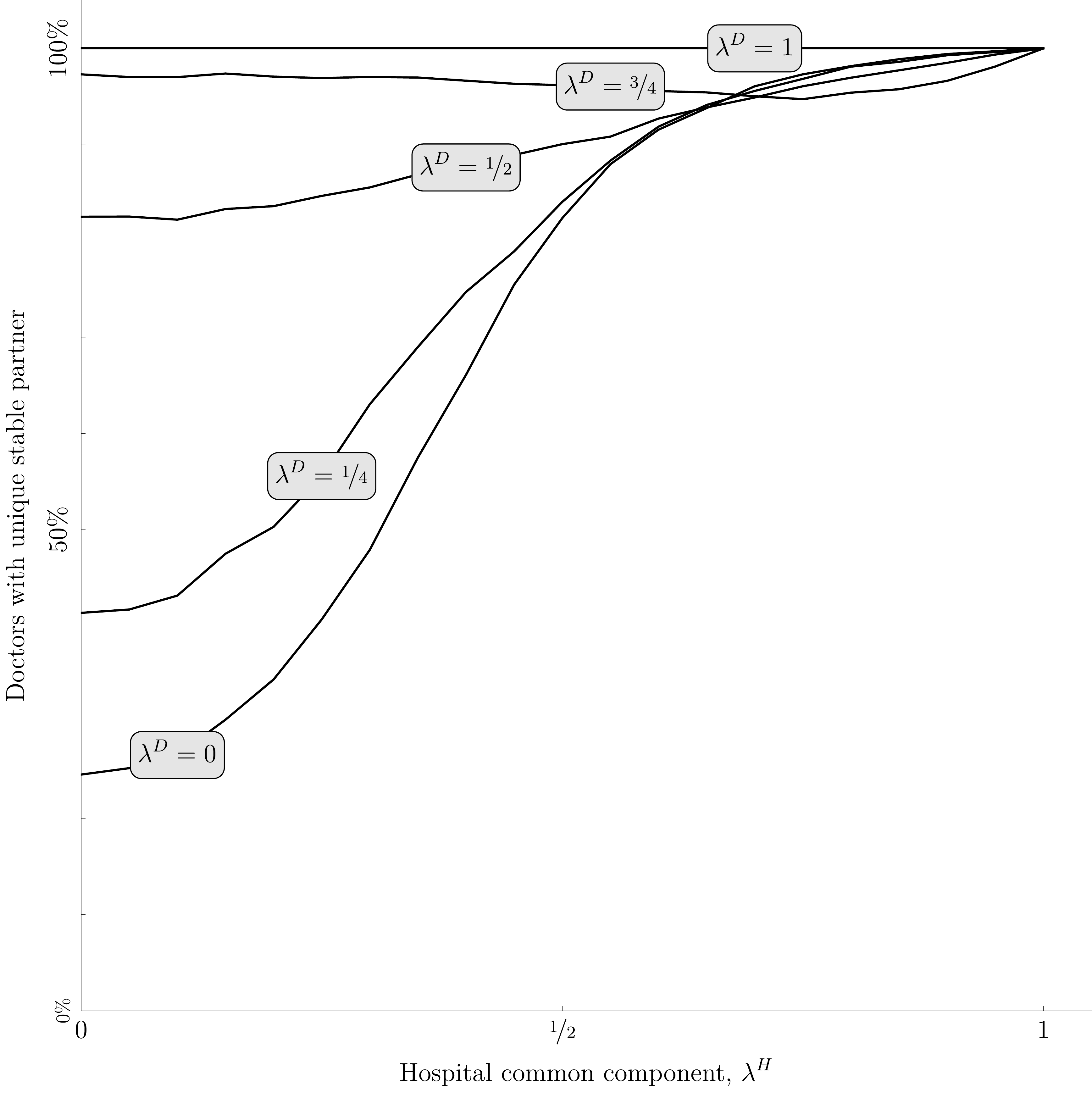}
\caption{Unique stable-partner}
\end{subfigure}
\caption{Simulated DA outcomes\label{fig:GSsim}}
\end{center}
\end{figure}

Having almost half of the doctors attain their first-ranked match, and almost all with a unique stable-match partner, is attained at an extreme for the preference weights. This happens when hospitals' preferences are driven almost entirely by the common component ($\lambda^H$ close to one) and doctors' preferences by the idiosyncratic component ($\lambda^D$ close to zero). 

Hospitals having a strong common component is consistent with NRMP survey data.\footnote{From the Director's Survey,  hospitals place substantial weight on features such as test scores, recommendation letters, etc.}. However, the requirement that doctors' preferences are almost completely idiosyncratic contradicts ample survey evidence. The NRMP's 2019 post-match resident survey suggests that common-value components (``reputation of program,'' having an ``academic medical center program,'' as well as quality of the residents, faculty, and educational curriculum) are cited at similar frequencies to idiosyncratic ones (``perceived goodness of fit" and ``geographic location'') as reasons for ranking programs.%\footnote{It should be noted that geographic location can also encompass a large common-value component.}

% FE: I'm not sure I understand the next two paragraps so I'm commenting them out.
%Beyond the very specific types of preference that are capable of generating these patterns, more-direct evidence from the candidate surveys points to distortions in the submitted rankings caused by the interview process. In terms of applications, doctors apply to many more programs than they rank. Averaging across the 23 medical specialties reported on by the NRMP, only one in five applications results in an extended interview with a program (ranging from a 7.8\% invitation rate in dermatology to a 45.3\% rate for vascular surgery). While the selection on the other side is less extensive, there is also evidence of doctors paring down the programs they interview with. Candidates attend approximately four out of every five interviews offered, ranging from a low of 70.4\% attendance for pathology, to a high of 94.7\% in dermatology. 

%Regardless of the source, the interview process drives  substantial selection over the reported rankings. Hospitals \emph{choose} a subset of the applicants for interviews, and candidates \emph{choose} to attend a subset of their offered interviews. Given that both sides tend to submit rankings only over those they interview with, it is natural to ask whether the interview process be driving the characteristic outcomes we outline for the NRMP data. 

We next show theoretically how the interview process can help reconcile these observations. We then use simulations to connect our framework with the empirical regularities observed in the NRMP.

%Below we theoretically model the effects of the interview process, and show how it can affect the submitted rankings. We then conclude with an array of simulations to connect our modeling framework to the empirical regularities in NRMP match outcomes.

\section{Theory}\label{sec:themodel}
Our model is a variant of the standard two-sided matching model \citep[see, for example,][]{rothsotomayor1990}, with an added interview stage. 
%We introduce an pre-match interview stage, which is captured through a many-to-many stability notion. After interviews are conducted, agents' reported preferences are ``truncated'' to interview partners only. These truncated preferences are then submitted to the DA algorithm that governs NRMP matches.

\subsection{Basic Definitions}\label{sec:defns}
A \df{market} is a triple $(H,D,U)$, where: $H$ is a finite set of \df{hospitals}; $D$ is a finite set of \df{doctors}; and $U=( (u_d)_{d\in D}, (u_h)_{h\in H})$ is \df{utility function} profile (with $u_d:H\cup\{d\}\rightarrow \Re$ and 
$u_h:D\cup\{h\}\rightarrow \Re$ for each $d$ and $h$).

A utility $u_a$ induces an ordinal preference $\succeq_a$ over the relevant set of alternatives, where we assume throughout that %$u_a(b)=u_a(b')$ implies that $b=b'$, so that 
the resulting ordinal preferences are \df{strict}. The \df{rank-order} of $b$ in $u_a$ is one plus the number of
$b'$ with $u_a(b')>u_a(b)$---so that a lower rank-order indicates a better ordinal outcome/higher ranking. In particular, agent $a$'s most-preferred match partner has rank-order 1. An agent $b$ is \df{unacceptable} for $a$ if $u_a(a)> u_a(b)$.

A \df{matching} is a function $\mu:H\cup D\rightarrow H\cup D$, with the properties that $\mu(h)\in D\cup\{h\}$,  $\mu(d)\in H\cup\{d\}$, and $\mu(d) = h$ \emph{iff} $\mu(h)=d$. A matching $\mu$ is \df{stable} for a market $(H,D,U)$ if $u_a(\mu(a))\geq u_a(a)$ for all $a\in D\cup H$, and there is no $(d,h)\in D\times H$ with $u_d(h)>u_d(\mu(d))$ and $u_h(d)>u_h(\mu(h))$.

A \df{many-to-many} matching is a function $\mu: H\cup D \rightarrow 2^{H\cup D}$ with the properties that $\mu(d)\subseteq H$,
$\mu(h)\subseteq D$, and $h\in\mu(d)$ \emph{iff} $d\in\mu(h)$. When an agent $a$ is unassigned, we have $\mu(a)=\os$. Given a pair of positive integers $(k,k')$, a many-to-many matching $\mu$ is \df{pairwise stable} for $(k,k')$ if
\begin{itemize}
  \item $\abs{\mu(d)} \leq k$ and there is no $h\in \mu(d)$ with $u_d(h)<u_d(d)$;
\item $\abs{\mu(h)} \leq k'$ and there is no $d\in \mu(h)$ with $u_h(d)<u_h(h)$;
\item %There is no pair $(h,d)$ such that $d\notin\mu(h)$, while $h$ would want to add $d$, and $d$ would like to add $h$, to its list of partners in $\mu$. [AW: Why not just have the formal definition.]
There is no $(h,d)$ such that $d\notin\mu(h)$ and any one of the following:
  \begin{itemize}
    \item $u_d(h)> u_d(h')$ and $u_h(d)> u_h(d')$ for some $(h',d')\in
      \mu(d)\times \mu(h)$;
      \item $u_d(h)> u_d(h')$, $u_h(d)> u_h(h)$, and $\abs{\mu(h)}<k'$ for some $h'\in\mu(d)$;
      \item $u_d(h)> u_d(d)$, $u_h(d)> u_h(d')$, and $\abs{\mu(d)}<k$ for some $d'\in \mu(h)$.
    \end{itemize}
\end{itemize}

\subsection{Interview Schedules}\label{sec:interviews}
In our model, doctors and hospitals first schedule interviews and then participate in the match. 

An \df{interview schedule} is a many-to-many matching. Given a pair of integers $(k,k')$, a \df{$(k,k')$-constrained interview schedule} is a many-to-many matching $\mu$ with $\abs{\mu(d)} \leq k$  and $\abs{\mu(h)} \leq k'$ for all $d$ and~$h$. Each doctor can interview with at most $k$ hospitals, and each hospital can interview at most $k'$ doctors. 

Given an interview schedule $\mu$, agents' \df{interview-truncated} preferences are determined by setting $u_a(b)<u_a(a)$ for all $b\notin \mu(a)$. That is, interview-truncated preferences rank all interviewed agents as in the original preferences, and set all other agents as unacceptable. 

The timing in our model is then: (i) An interview schedule is determined as the doctor-optimal many-to-many $(k,k')$-stable matching;\footnote{Arguably, the doctor-optimal stable matching at the interview stage yields a smaller difference between reported and actual ranks than other selections of stable matchings.} (ii) Doctors and hospitals report their interview-truncated preferences as inputs into doctor-proposing DA. This process' outcome is therefore the doctor-optimal stable matching on the interview-truncated preferences. We term this two-step process Int-DA: the \textbf{Int}erview process followed by \textbf{D}eferred \textbf{A}cceptance.

A doctor-optimal interview schedule can be found algorithmically using the ``T-algorithm'' \citep[see][]{blair1988lattice,fleiner2003fixed,TE2006233}. We assume it is the result of a decentralized interview scheduling process. One may imagine several reasons why an interview schedule might be unstable. Our focus is on the tension between a ``pure'' application of DA, and one that is preceded by interviews. Assuming a stable outcome at the interview stage provides us with a simple, tractable model.\footnote{In one-to-one matching markets, experimental evidence suggests decentralized interactions yield stable outcomes at high rates, see \cite{echenique2012experimental}. \cite{melcherashlagiwapnir2019} propose a stable-matching algorithm for internship interviews. For more on the theory of many-to-many matching, see \cite{sotomayor1999three,konishi2006credible}.}

We denote the final matching from Int-DA as $\mu^I$. We will compare the Int-DA matching to that obtained from the doctor-proposing DA algorithm using agents' original preferences,  $\mu^{DA}$.

\subsection{Interviews can increase rank-orders}
In general, interviews alone cannot explain the findings in the data: Int-DA does \emph{not} necessarily yield better-ranked partners in submitted preferences.% We present a particularly simple example, which sheds light on the conditions that ensure interviews increase reported rankings of partners. 
%[AW Commented Original]
% As a simple example, consider a matching market with three doctors, $\{d_1,d_2,d_3\}$, and four hospitals, $\{h_1,h_2,h_3,h_4\}$ (it is easy to concoct slightly more complicated examples with the same number of doctors and hospitals). Hospitals' preferences are common: they all prefer $d_1$ to $d_2$, $d_2$ to $d_3$, and $d_3$ to staying unmatched. Doctors' rank all hospitals as acceptable, with preferences given by (first to last):
%     \begin{center}
%         \begin{tabular}{rl}
%         $d_1$: & $h_1$, $h_3$, $h_2$, $h_4$;\\
%         $d_2$: & $h_2$, $h_3$, $h_1$, $h_4$;\\
%         $d_3$: & $h_1$, $h_3$, $h_4$, $h_2$.\\
%         \end{tabular}
%     \end{center}
% % \[\begin{array}{ccc}
% % d_1 & d_2 & d_3 \\ \hline
% % h_1 &  h_2 &  h_1 \\
% % h_3 &  h_3 &  h_3 \\
% % h_2 &  h_1 &  h_2 \\
% % d_1 & d_2 & d_3 
% % \end{array}
% % \]

% \bigskip

% The DA implies that $d_i$ matches to $h_i$. So the rank of $d_3$'s match is $2$. 

% Suppose interview constraints are $k=k'=2$. All doctors want to interview with $h_3$, but only $d_1$ and $d_2$ are able to. The resulting interview schedule is: $d_1$ with $h_1$ and $h_3$; $d_2$ with $h_2$ and $h_3$; and $d_3$ with $h_1$ and $h_4$.

% Given the interview-truncated preferences, $d_i$ matches with $h_i$ for $i=1,2$, but $h_3$ is matched with $d_4$. Thus, the Int-DA rank of $d_3$'s match is $4$. Hence, the rank of $d_3$'s partner in the presence of interviews is strictly worse than the rank of her partner under DA when no interviews take place. In fact, the outcome under Int-DA is unstable for the original preferences.

As a simple example, consider a matching market with three doctors, $\{d_1,d_2,d_3\}$, and four hospitals, $\{h_1,h_2,h_3,h_4\}$ (it is easy to concoct slightly more complicated examples with the same number of doctors and hospitals). Hospitals' preferences are common: they all prefer $d_1$ to $d_2$, $d_2$ to $d_3$, and $d_3$ to staying unmatched. Doctors' rank all hospitals as acceptable, with preferences given by (first to last):
    \begin{center}
        \begin{tabular}{rl}
        $d_1$: & $h_1$, $h_3$, $h_2$, $h_4$;\\
        $d_2$: & $h_2$, $h_3$, $h_1$, $h_4$;\\
        $d_3$: & $h_3$, $h_1$, $h_4$, $h_2$.\\
        \end{tabular}
    \end{center}
% \[\begin{array}{ccc}
% d_1 & d_2 & d_3 \\ \hline
% h_1 &  h_2 &  h_1 \\
% h_3 &  h_3 &  h_3 \\
% h_2 &  h_1 &  h_2 \\
% d_1 & d_2 & d_3 
% \end{array}
% \]

\bigskip

Under DA, $d_i$ matches to $h_i$. So the rank-order of $d_3$'s match is $1$. 

Suppose interview constraints are $k=k'=2$. All doctors want to interview with $h_3$, but only $d_1$ and $d_2$ are able to. The resulting interview schedule is: $d_1$ with $h_1$ and $h_3$; $d_2$ with $h_2$ and $h_3$; and $d_3$ with $h_1$ and $h_4$.

Given the interview-truncated preferences, $d_i$ matches with $h_i$ for $i=1,2$, but $h_3$ is matched with $d_4$. Thus, the Int-DA rank-order of $d_3$'s match is $2$. Hence, the rank of $d_3$'s partner in the presence of interviews is strictly worse than the rank of her partner under DA when no interviews take place. In fact, the outcome under Int-DA is unstable for the original preferences.

In this example, there is substantial disagreement between doctors' preferences. Indeed, there are no pairwise comparisons of hospitals $\{h_1,h_2,h_3\}$ on which doctors agree. In what follows, we show that some agreement on hospitals' rankings rules out such examples, and interviews can explain observed high match ranks.

\subsection{Interviews with Common Preference Components}\label{sec:theory}
Our discussion of the NRMP data emphasized the role of common components in doctors' and hospitals' preferences. Our first theoretical result (Proposition~\ref{prop:aligned}) confirms that, indeed, if doctors agree on hospitals' ranking, interviews improve observed match ranks in the succeeding clearinghouse. Our second result (Proposition~\ref{propLM}) shows that, as long as there is a common-value component in agents' preferences, \emph{however small}, the message of our first result holds in large markets. Finally, we illustrate convergence rates for the large-market result (Proposition~\ref{prop:convrates}).

\subsubsection*{Aligned Preferences}\label{sec:alignedpreferences}

We start with the extreme case where doctors' preferences are common.

\begin{proposition}\label{prop:aligned} Suppose $k=k'$ and that doctors' preferences are identical. For any doctor $d$, the rank-order of $\mu^I(d)$ in her interview-truncated preference is always weakly lower than the rank-order of $\mu^{DA}(d)$ in her actual preference $\succeq_d$.  
\end{proposition}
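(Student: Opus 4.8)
The plan is to exploit the fact that a common doctor preference turns each of the three relevant matchings into a ``hospital serial dictatorship.'' Relabel the hospitals $h^1\succ h^2\succ\cdots$ according to the doctors' common preference (discarding hospitals no doctor finds acceptable). Then: (i) $\mu^{DA}$ is produced by letting $h^1,h^2,\dots$, in this order, successively claim their most-preferred acceptable doctor among those still unmatched --- because in doctor-proposing deferred acceptance every doctor applies to $h^1$ first, so $h^1$'s partner is fixed immediately, and one recurses; write $\mu^{DA}(d)=h^{r_d}$ when $d$ is matched, so $r_d$ is exactly the rank-order of $\mu^{DA}(d)$ in $\succeq_d$ (if $d$ is unmatched under $\mu^{DA}$ the conclusion will be immediate, since then $r_d-1$ is the number of doctor-acceptable hospitals, which is at least $|\mu(d)|$). (ii) $\mu^I$, being the doctor-optimal stable matching for the interview-truncated preferences, is produced the same way except that the pool competing for $h^j$ is restricted to the doctors who interviewed $h^j$. (iii) The interview schedule $\mu$ is produced greedily: $h^1,h^2,\dots$ successively claim their top-$k$ acceptable doctors among those not yet ``full,'' where a doctor becomes full once it has been claimed by $k$ hospitals. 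A preliminary step is to verify that this greedy output really is the doctor-optimal $(k,k)$-pairwise-stable many-to-many matching; this is where $k=k'$ is used.

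Next I would reduce the statement. In $d$'s interview-truncated preference the rank-order of $\mu^I(d)$ equals $1+|R_d|$, where $R_d\subseteq\mu(d)$ is the set of hospitals $d$ interviewed that ``reject'' $d$ during the $\mu^I$-process (if $d$ ends up unmatched then $R_d=\mu(d)$ and the rank-order is $1+|\mu(d)|$, still $1+|R_d|$). So it suffices to show $|R_d|\le r_d-1$. Each $h\in R_d$ is claimed in the $\mu^I$-process by some doctor $\rho(h)$ with $\rho(h)\succ_h d$ (note $d$ is acceptable to $h$, since $h\in\mu(d)$), and $\rho$ is injective because $\mu^I$ is one-to-one. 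I would then prove $\rho(R_d)\subseteq\{d':r_{d'}<r_d\}$; since the serial dictatorship for $\mu^{DA}$ assigns distinct ranks $1,2,\dots$, this set has at most $r_d-1$ elements, which closes the argument.

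The core is therefore to show that if $h\in\mu(d)$ rejects $d$ in favour of $\rho(h)$ in the $\mu^I$-process, then $r_{\rho(h)}<r_d$, which I would do by induction on $r_d$. The base case $r_d=1$ is immediate: $d$ is $h^1$'s favourite, so by the greedy description $d$ interviewed $h^1$, hence $h^1$ claims $d$ at the first stage of the $\mu^I$-process and $R_d=\os$. For the inductive step I would combine three ingredients: (a) at stage $r_d$ of the $\mu^{DA}$-process, $h^{r_d}$ claims $d$ as the best still-unmatched doctor, so every doctor strictly above $d$ in $\succeq_{h^{r_d}}$ was already claimed earlier and thus has rank $<r_d$; (b) the greedy structure of $\mu$: if $d$ is already ``full'' when $h^{r_d}$ is processed then $\mu(d)\subseteq\{h^1,\dots,h^{r_d-1}\}$, so $|\mu(d)|=k\le r_d-1$ and the bound holds outright, while if instead $d$ is still available and $h^{r_d}$ itself claims $d$ in $\mu$, then in $\mu^I$ the doctor displacing $d$ at $h^{r_d}$ is someone $h^{r_d}$ prefers to $d$, hence of smaller rank; (c) the inductive hypothesis applied to each displacing doctor, to locate where it ends up matched in the $\mu^I$-process.

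I expect the main obstacle to lie in making ingredients (b) and (c) fit together: a doctor can be claimed in $\mu^{DA}$ by a hospital it never interviewed (having spent its $k$ interview slots on better-ranked hospitals), and one must then follow such a doctor through the $\mu^I$-process and rule out that it ``lingers'' unmatched into a late stage where it could crowd $d$ out of $h^{r_d}$. Pinning this down seems to require running the $\mu^{DA}$- and $\mu^I$-processes in lockstep and controlling their matched sets stage by stage, and it is precisely here that $k=k'$ and the common doctor preference are both needed. A smaller, separate technical point is the verification in step (iii) that the greedy construction of $\mu$ coincides with the doctor-optimal stable interview schedule delivered by the T-algorithm.
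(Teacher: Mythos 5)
Your opening move---relabelling hospitals by the common preference and recognizing that $\mu^{DA}$, $\mu^I$, and the interview schedule are all hospital serial dictatorships---is exactly the paper's starting point, and your observation that a doctor who is already ``full'' when $h^{r_d}$ is processed satisfies $k\le r_d-1$ (so the truncated rank is trivially bounded) is the paper's first case in disguise. But the proposal has a genuine gap at precisely the point you flag as ``the main obstacle'': the inductive claim that every hospital in $R_d$ ends up in $\mu^I$ with a doctor of strictly smaller DA-rank is asserted, not proved, and ingredients (b) and (c) as stated do not combine to give it. The worry you raise---a doctor claimed in $\mu^{DA}$ by a hospital it never interviewed, lingering into a late stage of the truncated DA---is real, and your sketch contains no mechanism to exclude it. A proof plan whose central step is announced as an open obstacle is not a proof.

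The idea you are missing is the paper's second observation, which dissolves the difficulty rather than confronting it. Since no doctor can be full before round $k$ of the greedy interview process, every hospital $h^j$ with $j< k$ (indeed $j\le k$) interviews exactly its $k$ most-preferred doctors; in particular $\mu^{DA}(h^j)$, whose rank in $\succeq_{h^j}$ is at most $j\le k$, is among $h^j$'s interviewees. A short induction on $j$ then shows that the truncated serial dictatorship and the full one make \emph{identical} choices at every round $j<k$: the pool of unclaimed doctors facing $h^j$ is the same in both processes, so $\mu^I(h^j)=\mu^{DA}(h^j)$. Hence for $r_d<k$ the doctor gets the very same hospital (whose truncated rank is weakly lower), and for $r_d\ge k$ the truncated rank is at most $k\le r_d$. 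This two-case split replaces your injective map $\rho$ and the rejection count entirely; if you insist on your route, the ``lockstep'' argument you would need to supply is exactly this coincidence of the two serial dictatorships on rounds $1,\dots,k-1$, so you should prove that statement directly rather than the per-hospital rank comparison $r_{\rho(h)}<r_d$.
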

%\noindent\emph{Proof in Online Appendix.}

The proof appears in the Online Appendix. Intuitively, when doctors' preferences are common, only one of the doctors under DA is matched to the highest-ranked hospital, one to the second-highest, etc. In particular, $n-k$ doctors are matched to a hospital ranked below their top $k$. In contrast, interviews allow for presorting of doctors to hospitals they have a chance of matching with. Interviews also limit how low a matched hospital can be ranked in the reported preferences: it can never be lower than $k$.    

The proposition assumes $k=k'$, mainly for expository reasons. In our main result below we allow for the two bounds to differ. We also show that simple truncation of preferences submitted to DA, absent interviews, cannot explain the gamut of stylized facts suggested by NRMP data.

\subsubsection*{Large Markets}\label{sec:largemkts}
%Proposition~\ref{prop:aligned} relies on perfectly aligned preferences. Here we show that even a small common-value component suffices to deliver our result, for a {\em large market}, when agents'
%utilities are randomly generated. 
We expand the model to account for market size, and for randomly generated preferences. For each $n$, let $(D_n,H_n,U_n)$ denote a market, where $D_n =\{d_1,\ldots,d_n\}$, $H_n = \{h_1,\ldots,h_n\}$ and each utility function is randomly drawn with a common-value and idiosyncratic component. As before, suppose that
\[
    u^n_d(h)  = \la^D c_h +  (1-\la^D) \eta_{d,h} \text{ and } \ u^n_h(d)  =  \la^H c_d + (1- \la^H) \eta_{h,d},
\]
for all $d\in D_n$ and $h\in H_n$, where $\la^D,\la^H\in (0,1)$. Suppose, moreover, that $u_a^n(a)=0$. 
%Here, $c_h$ and $c_d$ are common-value components of agents' utilities, while $\eta_{d,h}$ and $\eta_{h,d}$ are idiosyncratic.  
The common-value components $c_h$ and $c_d$ are crucial for our results, but need not dominate doctors' utilities, so $\la^D,\la^H>0$ can be arbitrarily small.

Suppose that $c_h$, $c_d$, $\eta_{d,h}$ and $\eta_{h,d}$ are all drawn from an absolutely continuous distribution with support $[0,1]$.\footnote{Any continuous distribution with strictly positive density and support on the positive reals suffices.} Let $\mu^{I}_n$ denote the matching resulting from the Int-DA process in the $n$-sized market, and  $\mu^{DA}_n$ the corresponding outcome of the doctor-proposing DA; these matchings are random and depend on the realized utilities, omitting the explicit dependence on $(c,\eta)$.

The Int-DA procedure determines a matching $\mu^I_n$ by choosing a $(k_n,k'_n)$-constrained interview schedule $\hat\mu$ as the doctor-optimal many-to-many stable matching, followed by the doctor-proposing DA using the induced preferences.

\begin{proposition} \label{propLM}
Suppose that $\limsup k_n/n < 1$ and
let $\ep,\ta > 0$. The probability of the following event converges to 1 as $n\rightarrow \infty$: For a fraction of at least
$1-\ta$ of doctors $d\in D_n$, the rank-order of $\mu^I_n(d)$ in $d$'s interview-truncated preference is strictly below the rank-order of $\mu^{DA}_n(d)$ in $d$'s actual preference $\succeq_d$.
%any hospital generating a utility of at most $u_d(\mu^{DA}_n(d))-\ep$ in $d$'s actual preference $\succeq_d$.
\end{proposition}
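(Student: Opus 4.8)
The plan is to show that, with probability tending to one, two things happen simultaneously for all but a $\ta$-fraction of doctors: (i) under plain DA the rank-order of $\mu^{DA}_n(d)$ is large — in fact it grows without bound for almost all doctors; and (ii) under Int-DA the rank-order of $\mu^I_n(d)$ in $d$'s interview-truncated preference is bounded by $k_n$, which in turn is bounded away from the typical DA rank. Part (ii) is essentially structural: in any interview schedule, $d$ interviews with at most $k_n$ hospitals, so if $d$ is matched at all under Int-DA the reported rank of her partner is at most $k_n$; the delicate point is ruling out that too many doctors go unmatched under Int-DA. Part (i) is the probabilistic heart: I would argue that the common component $c_h$ (with weight $\la^D > 0$) forces congestion at the commonly-desired hospitals, so that a positive-measure set of doctors is pushed far down their true lists under DA.

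First I would set up the comparison more carefully. Fix $\ta$; by the assumption $\limsup k_n/n < 1$ choose $\rho < 1$ with $k_n \le \rho n$ eventually, and note we want to beat the threshold ``DA-rank $> k_n$.'' I would first handle the DA side: show that for any fixed $R$, the expected fraction of doctors whose $\mu^{DA}_n$-rank is at most $R$ tends to $0$ (or at least can be made $< \ta/2$ by taking $n$ large and, if needed, $R$ a slowly growing function or a large constant compared to the relevant bound). The mechanism: because $u^n_d(h) = \la^D c_h + (1-\la^D)\eta_{d,h}$ with $\la^D$ fixed and positive, the hospitals with the highest $c_h$ are near the top of a constant fraction of doctors' lists; only one doctor can match to the top hospital, a bounded number to the top few, so most doctors are displaced. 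Quantitatively I would either invoke a known large-market result on the distribution of DA ranks with correlated preferences, or do a direct first-moment computation: bound the probability that doctor $d$ is matched within her top $R$ hospitals by analyzing which of those $R$ hospitals would reject her given the common ordering on the hospital side (here $\la^H > 0$ helps pin down the hospital ranking of $d$). Summing over $d$ and using Markov's inequality gives the high-probability statement that at most $\ta/2$ of doctors have DA-rank $\le k_n$ — wait, here I need the DA rank to exceed $k_n$, which may itself grow linearly; so more honestly I should show the fraction of doctors with DA-rank $\le \rho n$ is at most $\ta/2$ w.h.p., which is a statement that a constant fraction of doctors are matched to hospitals in the bottom $(1-\rho)n$ of their list. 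This follows because the many doctors rejected from the commonly-desired top hospitals cascade downward.

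Next I would handle the Int-DA side: show that w.h.p. at least a $(1-\ta/2)$-fraction of doctors are matched under $\mu^I_n$, and when matched their reported rank is $\le k_n$ automatically. Matched-ness should follow from Proposition~\ref{prop:aligned}-type reasoning extended probabilistically, or from a direct argument that the doctor-optimal $(k_n,k'_n)$-stable interview schedule, followed by DA, leaves few doctors unmatched — intuitively, the presorting in the interview stage means each doctor's $k_n$ interview partners are ``well chosen'' so she clears. I expect \textbf{this matched-ness bound to be the main obstacle}: unlike the deterministic aligned case, with random idiosyncratic components one must argue that the interview stage does not systematically leave a non-negligible set of doctors with all $k_n$ interview slots wasted on hospitals that reject them, and this requires controlling the structure of the random many-to-many stable matching (the T-algorithm output). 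A clean route is to show that the Int-DA matching is itself stable for the interview-truncated market and that in such a market the number of unmatched doctors is small with high probability, perhaps by a counting/deferred-acceptance-trace argument exploiting that $k_n, k'_n \to \infty$ (or are at least moderately large) so the interview graph is ``rich enough.''

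Finally I would combine: on the intersection of the two high-probability events (DA-rank $> k_n$ for $\ge 1-\ta/2$ of doctors; matched with reported rank $\le k_n$ under Int-DA for $\ge 1-\ta/2$ of doctors), a union bound gives that for at least a $(1-\ta)$-fraction of doctors the Int-DA reported rank is strictly below the DA rank, which is exactly the claimed event; and the probability of this intersection tends to $1$. The $\ep$ in the statement appears only as a tolerance in ``converges to $1$,'' so it plays no structural role beyond choosing $n$ large. I would close by remarking that the argument uses $\la^D > 0$ essentially in establishing the DA-congestion bound (with $\la^D = 0$ the preferences are i.i.d. and DA ranks stay bounded, matching the three-doctor counterexample), and $\la^H > 0$ to give the hospital side enough common structure to make the rejection cascade tractable.
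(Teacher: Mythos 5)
Your high-level architecture coincides with the paper's: show that for most doctors the true rank-order of $\mu^{DA}_n(d)$ exceeds $k_n$, observe that the reported rank under Int-DA is mechanically at most $k_n$, and combine the two high-probability events with a union bound. But the probabilistic heart---your part (i)---is precisely the step you leave as a sketch, and the sketch does not close. Because only $\limsup k_n/n<1$ is assumed, $k_n$ may be a positive fraction of $n$, so you must show that a $1-\ta/2$ fraction of doctors are matched under DA to a hospital outside their top $\rho n$; neither the first-moment/Markov idea nor the ``rejection cascade'' intuition delivers that quantitative bound, since knowing that doctors are displaced from the commonly-desired top hospitals says nothing about \emph{where} in their lists they land. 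The paper's missing ingredient is the notion of a target utility $c_d+1$ (after normalizing supports to $[0,1]$): Theorem~1 of \citet{lee2016incentive} gives that, with high probability, at least a $1-\ta/2$ fraction of doctors receive DA utility within $\ep$ of $c_d+1$, and a separate Hoeffding argument shows that for most doctors more than $k_n$ hospitals $h$ satisfy $c_h+\eta_{d,h}>c_d+1$ (this is where $\limsup k_n/n<1$ enters, via a cutoff $c^\star$ chosen so that $P(c_h+\eta_{d,h}>c^\star+1)$ exceeds $\limsup k_n/n$ while $c_d\le c^\star$ for all but a $\ta/4$ fraction of doctors). Together these place the DA partner below rank $k_n$ in the true list. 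This is the concrete ``known large-market result'' you would need to invoke; note it concerns utilities, not ranks, and the conversion to ranks is itself a nontrivial part of the proof.

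You also misidentify the main obstacle. Matchedness under Int-DA is not required: if $d$ is unmatched then $\mu^I_n(d)=d$, whose rank-order in the interview-truncated preference is still at most $k_n+1$, because only the $k_n$ interviewed hospitals remain acceptable. The paper's formal statement builds in a margin $M$ (replacing $k_n$ by $k_n+M$ at the outset) precisely so that this bound still beats the DA rank; the paper never proves---and its own simulations contradict---the claim that almost all doctors are matched under Int-DA. So the step you flag as the crux is a non-issue, while the step you treat as routine is where essentially all of the work lies.
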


%Proposition~\ref{propLM} says that an arbitrarily large fraction of doctors (with arbitrarily large probability) assign a higher rank to their match under Int-DA than their true ranking under DA. A formal statement of Proposition~\ref{propLM} and its proof appear in the Appendix.
%Proposition~\ref{propLM} implies that, as the market grows in size, an arbitrarily large fraction of doctors (with arbitrarily large probability) place a lower rank-order on their match under Int-DA than their true ranking under DA. 
A fully formal statement, and 
the proof of  Proposition~\ref{propLM}, appear in the Appendix. 

The idea underlying the proposition is simple. Consider DA and let $\ep>0$. By \cite{lee2016incentive}, when $n$ is large, with high probability, the set $A_n(\ep,(c,\eta))$ of doctors that are within $\ep$ of their ``target'' assortative utility in DA account for at least $1-\ta/2$ of all doctors. Let $B(c_d,n)$ be the event that fewer than $k_n$ hospitals give doctor $d$ a utility greater than $d$'s target utility. We denote by $\beta_n$ the probability that a fraction of at least $\theta/2$ doctors have a ``small'' number (at most $k_n$) of hospitals above their target utility. We show that for $n$ large enough, $\beta_n <\pi/2$, and by \cite{lee2016incentive}, $P \left(\frac{1}{n} \abs{A_n(\ep,(c,\eta))} \geq 1-\ta/2\right) > 1-\pi/2$. Thus, the event that $B(c_d,n)$ is false for a fraction $\geq 1-\ta/2$ of doctors {\em and} $\frac{1}{n} \abs{A_n(\ep,(c,\ep))} \geq  1-\ta/2$, has probability $\geq (1-\pi/2 ) + (1-\pi/2 ) -1 = 1-\pi$. At the intersection of these conditions, for a fraction $\geq (1-\ta/2) + (1-\ta/2) -1 = 1-\ta$ of $d\in D_n$, we have that $B(c_d,n)$ is false and $d\in A_n(\ep)$. Hence, for a fraction $\geq 1-\ta$ of $d\in D_n$ there are more than $k_n$ hospitals above their target utility, and they are within $\ep$ of their target utilities. %This is the statement of the proposition.  

Finally, we note that convergence rates for the large-market result in Proposition~\ref{propLM} are modest, with (poly-)logarithmic or polynomial growth in the relevant ``approximation guarantees'' $\ta$ and $\pi$. In words, the market size needed for Proposition~\ref{propLM} does not grow too quickly with the approximation guarantees. This message complements the simulations in Section~\ref{sec:simulations}, which assume (arguably) realistic market sizes, and can be formalized as follows (detailed proof appears in the Online Appendix):
\begin{proposition}\label{prop:convrates}
The statement in Proposition~\ref{propLM} holds for $n=\Theta((\ln(1/\pi))^4)$ as $\pi\rightarrow 0$, and $n=\Theta((1/\ta)^4)$ as $\ta\rightarrow 0$.
\end{proposition}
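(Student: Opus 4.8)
The plan is to revisit the sketch of Proposition~\ref{propLM} and track, quantitatively, how large $n$ must be for each of the two high-probability events to hold with the desired slack. Recall the argument combines two ingredients: (i) the result of \cite{lee2016incentive}, which gives that the fraction of doctors whose DA utility is within $\ep$ of their assortative ``target'' utility is at least $1-\ta/2$ with probability at least $1-\pi/2$ once $n$ is large; and (ii) a counting/concentration bound showing that the probability $\beta_n$, that a fraction at least $\ta/2$ of doctors have fewer than $k_n$ hospitals delivering utility above their target, is below $\pi/2$ for $n$ large. The conclusion holds on the intersection of the two complementary events, which has probability at least $1-\pi$. So the task reduces to: (a) extract from \cite{lee2016incentive} the explicit dependence of their ``large $n$'' threshold on $(\ep,\ta,\pi)$, and (b) bound the rate at which $\beta_n\to 0$.

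For step (b), I would fix a doctor $d$ and note that, conditional on $c_d$, the number of hospitals $h$ with $u^n_h(d)>$ (target) is a sum of $n$ i.i.d.\ Bernoulli indicators with success probability bounded below by a constant $p_0>0$ determined by $\la^H$ and the target-utility fractile (which stays bounded away from the extremes for a $1-O(\ta)$ fraction of doctors, using that $\limsup k_n/n<1$). Hence $\Pr[B(c_d,n)]$ — the event of fewer than $k_n \le (1-\da)n$ such hospitals — is exponentially small in $n$ by a Chernoff bound, say $\le e^{-c_1 n}$. Then $\beta_n = \Pr[\text{fraction} \ge \ta/2 \text{ of doctors have } B(c_d,n)]$ is controlled by Markov's inequality: the expected fraction of ``bad'' doctors is $\le e^{-c_1 n}$, so $\beta_n \le (2/\ta)\,e^{-c_1 n}$. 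Requiring this to be $<\pi/2$ gives $n = O\!\big(\ln(1/(\ta\pi))\big)$, which is dominated by the threshold coming from (a).

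The binding constraint is therefore ingredient (i). Inspecting \cite{lee2016incentive}, their guarantee that a $1-\ta$ fraction of agents are $\ep$-close to the assortative benchmark with probability $1-\pi$ requires $n$ polynomial in $1/\ta$ and polylogarithmic in $1/\pi$; tracing their constants yields, for fixed $\ep$, a sufficient size $n = \Theta\!\big((\ln(1/\pi))^4\big)$ as $\pi\to 0$ and $n = \Theta\!\big((1/\ta)^4\big)$ as $\ta\to 0$. Combining with the exponentially-small $\beta_n$ bound, these same orders suffice for the full conclusion of Proposition~\ref{propLM}, which is exactly the claim.

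**Main obstacle.** The real work is step (a): \cite{lee2016incentive} state their result asymptotically, and pinning down that their sufficient $n$ scales as $(1/\ta)^4$ and $(\ln(1/\pi))^4$ — rather than, say, a worse polynomial — requires carefully re-deriving their concentration estimates with explicit constants rather than invoking the theorem as a black box. The counting bound on $\beta_n$, by contrast, is routine (Chernoff plus Markov) and does not affect the final exponents.
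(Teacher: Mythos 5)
Your proposal correctly reconstructs the architecture of the argument: the conclusion of Proposition~\ref{propLM} sits on the intersection of two events, the concentration bound on $\beta_n$ is routine and only forces $n$ to grow logarithmically in $1/\pi$ (the paper gets this via Hoeffding plus a union bound over doctors, you via Chernoff plus Markov --- both work and both are dominated), and the binding constraint is the threshold coming from the Lee (2016) approximate-assortativity result. This all matches the paper.

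The genuine gap is that the entire quantitative content of the proposition --- why the exponent is $4$ --- is asserted rather than derived. You write that ``tracing their constants yields \dots $n=\Theta((\ln(1/\pi))^4)$ and $n=\Theta((1/\ta)^4)$'' and then flag this tracing as the main obstacle; but that tracing \emph{is} the proof, and without it you have only re-stated the claim. The paper resolves this by importing the explicit inequalities inside Lee's Theorem~1 rather than the theorem as a black box: the fraction-of-agents guarantee takes the form
\[
\frac{2\log n}{n^{1/4}}\left(\frac{1}{\sqrt{n}}-\frac{3}{n^{1/4}}\right)+\frac{6}{n^{1/4}}<\frac{\ta}{2},
\]
whose dominant term $6/n^{1/4}<\ta/2$ forces $n\geq (12/\ta)^4$, and the probability guarantee takes the form $(1-g_n)^{2n^{1/4}-4}\geq 1-\pi/2$ with $g_n=o(e^{-\sqrt{n}\log n})$, which after taking logarithms forces $n=\Theta\bigl((\ln(1/\pi))^4\bigr)$ (these are inequalities~\eqref{eq:fourbounds} and~\eqref{eq:fivebounds} in the paper's appendix). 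The fourth powers thus come directly from the $n^{1/4}$ scaling in Lee's error terms, and a complete proof must exhibit these expressions; it cannot defer to an unspecified ``polynomial in $1/\ta$, polylogarithmic in $1/\pi$'' dependence, since that phrasing is compatible with any exponent.
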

%\noindent\emph{Proof in Online Appendix.}

\section{Simulations}\label{sec:simulations}
Our theoretical findings raise three important questions. The first regards market size. Proposition \ref{propLM} is asymptotic, and it is natural to consider  whether interviews matter for smaller, more realistic, market sizes. The second question regards unmatched agents. One might worry that interview-truncated preferences give rise to large numbers of unmatched participants, beyond those observed in the NRMP. The final question regards stability.%Finally, there is the question of stability. 
As our example in Section 3 makes clear, even though both interview selection and DA separately produce stable outcomes, their sequential application does not guarantee stability. Ideally, the difference between outcomes under DA and the interview-truncated DA procedure would be small.%\footnote{By construction, all blocking pairs under Int-DA are between a hospital-doctor pair that did not interview. Thus, their presence may be less keenly felt.}

We address these questions using numerical simulations at two market sizes: a small market of $N=50$, and a medium market of $N=1,700$.\footnote{While the 2020 NRMP had 37,256 positions listed, the match breaks down into a number of specialty sub-markets. For the 2019 NRMP outcome report (Table 13) the specialties vary in size from 22 positions for Pediatrics/Medical Genetics (the NRMP only provide data for specialties with more than 20 total positions) to 9,127 for Internal Medicine. The 20th and 80th percentiles across the listed sub-markets in 2019 have 37 and 1,740 positions listed, mirroring our chosen simulation sizes.}

In our simulations, we apply a weight of either $\tfrac{1}{4}$, $\tfrac{1}{2}$, or $\tfrac{3}{4}$ to the common components of both hospitals and doctors (so $\lambda^D=\lambda^H$).\footnote{In all simulations we use normally distributed common and idiosyncratic draws to derive cardinal preferences $u_d(h)$ and $u_h(d)$, per Section 2.} We conduct 340 simulations for the $N=50$ markets, and 10 simulations for $N=1,700$, leading to outcome information on 17,000 market participants at each market-size--preference-weight pair, $(N,\lambda)$.\footnote{All reported figures are averages across the simulations and doctors.} In each simulated market we first draw and fix market-wide preferences, and then calculate the following match outcomes:\renewcommand{\labelenumi}{(\roman{enumi})}
\begin{itemize}
    \item Doctor- and hospital-proposing deferred acceptance (DA).
    \item The stable interview allocation with $k=k^\prime=5$ slots per position, followed by both doctor- and hospital-proposing deferred acceptance on the interview-truncated preferences.
    \item Doctor- and hospital-proposing deferred acceptance on preferences truncated to the $k=5$ top-ranked options(Tr-DA).
\end{itemize}

\begin{table}[p]
    \centering
    \caption{Simulation Outcomes}
    \label{table:SimOutcomes}
    \begin{tabular}{cccccccc}
    \toprule
    & \multicolumn{3}{c}{$N=50$} &  & \multicolumn{3}{c}{$N=1700$}\\ \cmidrule{2-4}\cmidrule{6-8}
     & $\lambda=\nicefrac{1}{4}$ & $\lambda=\nicefrac{1}{2}$ & $\lambda=\nicefrac{3}{4}$ &  & $\lambda=\nicefrac{1}{4}$ & $\lambda=\nicefrac{1}{2}$ & $\lambda=\nicefrac{3}{4}$\\
     \midrule
      \multicolumn{7}{l}{\textbf{Panel A: Matching outcomes}}\\ 
& \multicolumn{7}{c}{\emph{Unmatched} \hspace{0.25in}$\left[\text{DA: }0.0\%,\text{NRMP: }5.4\%\right]^\dagger$ } \\ \cmidrule{2-8}
     Int-DA & 5.7\% & 5.3\% & 4.4\% &  & 6.0\% & 5.8\% & 5.5\%\\ 
    Tr-DA & 14.6\% & 39.9\% & 71.7\% &  & 24.2\% & 68.9\% & 95.4\%\\ 
    & \multicolumn{7}{c}{\emph{First-ranked program}\hspace{0.25in} $\left[\text{NRMP: }48.1\%\right]^\dagger$ } \\\cmidrule{2-8}
    DA & 16.1\% & 7.2\% & 3.6\% &  & 2.9\% & 0.6\% & 0.2\%\\
      Int-DA & 42.8\% & 39.8\% & 34.6\% &  & 43.0\% & 41.5\% & 40.6\%\\ 
     Tr-DA & 31.1\% & 12.2\% & 3.8\% &  & 24.1\% & 4.7\% & 0.3\%\\
     & \multicolumn{7}{c}{\emph{Top-three--ranked program match}\hspace{0.25in}  
     $\left[\text{NRMP: }73.2\%\right]^\dagger$ } \\\cmidrule{2-8}
    DA & 41.3\% & 21.3\% & 10.2\% &  & 8.3\% & 2.0\% & 0.6\%\\
      Int-DA & 81.9\% & 81.4\% & 80.7\% &  & 81.7\% & 81.3\% & 81.2\%\\ 
     Tr-DA & 68.5\% & 38.6\% & 12.9\% &  & 57.5\% & 17.6\% & 1.7\%\\
   \midrule
    \multicolumn{7}{l}{\textbf{Panel B: Core size, similarity to DA, and stability}}\\
     & \multicolumn{7}{c}{ \emph{Same partner under proposer change} \hspace{0.05in}$\left[\text{NRMP: }99.9\%\right]^\ddagger$ }  \\ \cmidrule{2-8}
    DA & 60.9\% & 88.7\% & 93.9\% &  & 43.7\% & 95.0\% & 98.9\%\\
      Int-DA & 98.4\% & 98.6\% & 97.6\% &  & 99.9\% & 99.9\% & 99.9\%\\ 
%     Tr-DA & 99.3\% & 99.5\% & 99.0\% &  & 100.0\% & 100.0\% & 100.0\%\\
%      Tr-DA & 42.9\% & 27.3\% & 15.7\% &  & 8.5\% & 2.7\% & 0.7\%\\ 
      & \multicolumn{7}{c}{\emph{Identical partner to DA}} \\ \cmidrule{2-8}
       Int-DA  & 74.0\% & 80.8\% & 82.3\% &  & 73.4\% & 78.1\% & 77.0\%\\ 
    %   Supra-DA & 4.0\% & 4.7\% & 5.7\% &  & 8.1\% & 7.7\% & 8.0\%\\ 
    %   Sub-DA, in core & 14.7\% & 4.5\% & 1.9\% &  & 12.4\% & 1.8\% & 0.3\%\\
    %   Sub-core & 1.7\% & 4.7\% & 5.7\% &  & 0.1\% & 6.6\% & 9.1\%\\ 
     %  Unmatched & 5.7\% & 5.3\% & 4.4\% &  & 6.0\% & 5.8\% & 5.5\%\\ 
   &\multicolumn{7}{c}{\emph{Proportion blocking programs in Int-DA}} \\\cmidrule{2-8}
     Matched & 0.5\% & 0.8\% & 1.3\%  & & 0.1\%  & 0.5\% & 1.7\% \\
     Unmatched & 16.4\% & 20.3\% & 33.3\% &  & 10.0\% & 15.4\% & 32.6\% \\
%    Tr-DA | no match & 37.4\% & 53.1\% & 77.4\% &  & 44.8\% & 76.4\% & 96.3\% \\
\bottomrule
\end{tabular}
\begin{tablenotes}
$\dagger$--Average for US MD Seniors in 2016--20. Source:   \emph{Results and Data: 2020 Main Residency Match}, Table 15, available from nrmp.org.
$\ddagger$--Figure reported for main NRMP match in \citet{roth1999redesign}. Smaller thoracic surgery market ($N\simeq120$) has a 99.6 percent unique match for five reported years in 1991--96 (\emph{ibid}, tables 1 and 3). 
\end{tablenotes}
\end{table}

Table~\ref{table:SimOutcomes} provides outcomes from our simulations across the six $(N,\lambda)$ parameter pairs. Simulations with Tr-DA were added to distinguish the pure effect of truncation from the interview process our paper focuses on. The first panel in the table provides three characteristics of the match outcome: (i) the fraction of \emph{unmatched} participants; (ii) the fraction of doctors matched to their \emph{first-ranked program}; and (iii) the proportion of doctors matched to a \emph{top-three--ranked program}.

Because our simulated markets have the same participant volume on each side, with all possible matches acceptable, the benchmark for DA with full preferences predicts no unmatched doctors. In contrast, the NRMP data indicates that 5.8\% of US seniors are unmatched. The first result from our simulations in Table~\ref{table:SimOutcomes} illustrates that the two-stage Int-DA process leads to a similar unmatched rate as the NRMP. Doctors in our simulations are unmatched after the Int-DA process at a 5.5\% rate. Moreover, this proportion does not change substantially with either market size or the common weight. In contrast, a direct truncation to the top-five participants on the other side leads to substantially more unmatched participants. Moreover, the unmatched rate grows sharply with increases to $N$ and $\lambda$.

The next pair of results from the  Int-DA simulations again match the NRMP data: a large fraction of doctors are matched to top-ranked hospitals. Looking to NRMP data from the past five years, 48\% (73\%) of US MD Seniors are matched to their first-ranked (top-three--ranked) program. The Int-DA simulations again indicate similarly-sized effects to the observed NRMP figures, at 40\% (81\%).\footnote{The Int-DA fraction matched to their first-ranked program does increase slightly as we increase $N$, and decreases slightly as we increase $\lambda$.} In contrast, the pure DA algorithm on the full preference lists implies substantially lower rates of top-ranked outcomes, particularly  in larger markets and as the common weight increases.

In the second panel of Table \ref{table:SimOutcomes} we turn to other observed match outcomes.  These outcomes are not part of our explanation of reduced match ranks, but they serve to evaluate the empirical relevance of our interviews model. The first outcome is motivated by \citeauthor{roth1999redesign}'s \citeyearpar{roth1999redesign} finding that NRMP data exhibit small cores. Using NRMP ranking data from the 1990s, they examine the change in outcomes moving from the doctor- to the hospital-proposing DA. They find that 99.9\% of doctors receive the same outcome---implying a unique stable partner. In the \emph{same partner under proposer change} rows we mirror this exercise. Our DA simulations get close to the NRMP figure only in the larger markets with a heavy weight on the common component. While most participants across each of the simulations do have a unique stable partner, the minority with multiple partners are at least an order of magnitude larger than in \citet{roth1999redesign}. However, changing the proposing side over the interview-truncated rankings from Int-DA indicates much-closer effects to the NRMP field study. Indeed, for the $N=1700$ markets we exactly replicate the given number across the three values of  $\lambda$.%\footnote{While the figures are lower for the $N=50$ market, we note that \citet{roth1999redesign} also show that the unique match figure goes down to $99.6$ in the thoracic surgery market which has a size of $N\simeq120$.}

Our simulations of the Int-DA procedure show that it can reproduce stylized results reflective of the observed NRMP figures---over unmatched rates, over the fraction of first-ranked outcomes, and over the small cores found in rank-order list data. Moreover, the Int-DA process does so generically, across market sizes and the common-preference weights. 

Given the fit with observed data regularities, a natural question regards the difference between outcomes under Int-DA and standard DA? The final set of results in  Table~\ref{table:SimOutcomes} speak to this question. 

The \emph{identical partner to DA} row directly contrasts the Int-DA and DA match outcomes. Averaging across our parameterizations, we find that 78\% of participants in the Int-DA procedure are matched to the \emph{exact same partner} they would match to under DA with truthful preferences reports. While four of every five doctors are entirely unaffected by the interview process, 22\% of participants being affected is far from negligible.

In the last section of Table~\ref{table:SimOutcomes} we evaluate the effects on stability. For each doctor we calculate the proportion of programs they form a blocking pair with. We report the average proportion, distinguishing between matched and unmatched doctors. %The number can be interpreted as the likelihood of detecting instability in the resulting matching by drawing a random doctor-hospital pair uniformly from the population.
Matched doctors exhibit some instability, despite both stages in the two-stage process being chosen to select stable outcomes. Averaging across parameterizations, a blocking pair is detected for matched doctors 0.8\% of the time. Unsurprisingly, instabilities are more substantial for unmatched doctors. A randomly chosen hospital yields a blocking pair between 10\% and 33\% of the time for each unmatched doctor. 
\section{ Conclusion}
Much of the matching literature has focused on the centralized clearinghouse governing the match of newly-minted doctors and residency positions. We illustrate the possibility that \textit{decentralized} interactions preceding the match---namely, interviews---may dramatically impact ultimate outcomes.

For the NRMP, our results imply that empirical estimations based on preferences submitted to the clearinghouse should be used with great caution. More broadly, beyond the NRMP, our paper suggests that interactions outside of the clearinghouse can have dramatic effects on outcomes. %Analyses of clearinghouses cannot be reliably performed in isolation from other decentralized features of the market.
\bibliographystyle{te}
\bibliography{./intvw-bib.bib}
\appendix
\section{Proof of Proposition 2}
\label{Proof:Prop2}
%\input{proofs/ProofProp1}

%\subsection{Proposition 2}
\label{sec:proofofpropLM}
A formal statement of Proposition~\ref{propLM} follows.

\begin{proposition*}
    Let $k_n\geq 1$ be a sequence of positive integers and $M\geq 1$ be a constant. 
    Let $\ep, \ta, \pi \in (0,1)$. Suppose that $\limsup k_n/n < 1$.
    Then there is $N\in \Na$ such that for all $n\geq N$ $P(E_n)\geq 1-\pi$, where $E_n$ is the set of $c_h$, $c_d$, $\eta_{d,h}$ and $\eta_{h,d}$ such that in the resulting market $(D_n,H_n,U_n)$, for a fraction of at least $1-\ta$ of doctors $d$, the rank-order of $\mu^I_n(d)$ in her interview-truncated preference is lower by at least $M$ than the rank-order of any hospital generating utility of at most $u_d(\mu^{DA}_n(d))-\ep$ in her actual preference $\succeq_d$.
\end{proposition*}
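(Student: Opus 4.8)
The plan is to combine the large‑market concentration result for ordinary deferred acceptance (\cite{lee2016incentive}) with a counting bound on how many hospitals lie above a doctor's ``target'' utility, and then convert a utility margin of $\ep$ into a rank margin exceeding the constant $M$. Fix $\g\in(0,1)$ with $k_n\le (1-\g)n$ for all large $n$ (possible since $\limsup k_n/n<1$). Write $r_d(\cdot)$ for rank‑order in $d$'s actual preference $\succeq_d$, and $t^n_d$ for $d$'s target utility under DA. The only structural input I need about Int‑DA is this: whatever $\mu^I_n(d)$ is — a hospital, or $d$ itself if $d$ is left unmatched — its rank‑order in $d$'s interview‑truncated preference is at most $k_n+1$. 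Indeed, the interview‑truncated preference ranks above ``staying unmatched'' only the at most $k_n$ hospitals $d$ interviewed with (all of which have positive utility and hence remain acceptable), so any match $\mu^I_n(d)$ has at most $k_n$ hospitals above it, and $d$ itself is at worst $(k_n+1)$‑st. No bound on the unmatched fraction under Int‑DA is therefore needed.

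Next I would invoke \cite{lee2016incentive}: for the given $\ep,\ta,\pi$, for $n$ large, with probability at least $1-\pi/2$ there is a set $A_n$ of at least $(1-\ta/2)n$ doctors with $|u_d(\mu^{DA}_n(d))-t^n_d|<\ep/2$. For each $d$ let $B(d,n)$ be the event that strictly fewer than $k_n+M$ hospitals give $d$ utility exceeding $t^n_d-\ep/2$, and set $\beta_n=P(\text{at least }(\ta/2)n\text{ doctors satisfy }B(d,n))$. Conditioning on the common components and the targets, the count $|\{h:u_d(h)>t^n_d-\ep/2\}|$ is essentially a sum of $n$ independent indicators, so a Hoeffding bound reduces the matter to its conditional mean; the substantive claim is that all but a $\ta/4$‑fraction of doctors have this mean at least $k_n+2M$, i.e.\ $t^n_d-\ep/2$ sits below the $\g$‑quantile of $d$'s hospital‑utility law. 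This is exactly where the common component is essential — it anchors most doctors to a hospital ``at their level,'' leaving a strictly positive mass of hospitals they strictly prefer — and where $\limsup k_n/n<1$ is used, since it makes the threshold $k_n+2M\le(1-\g/2)n$ attainable. I expect proving $\beta_n<\pi/2$ for all large $n$ to be the main obstacle; everything else is bookkeeping of the three error budgets $\ep,\ta,\pi$.

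Finally I would combine the pieces by a union bound. With probability at least $1-\pi$, both $|A_n|\ge(1-\ta/2)n$ and the set of $d$ for which $B(d,n)$ fails has size at least $(1-\ta/2)n$; their intersection has size at least $(1-\ta)n$. For a doctor $d$ in the intersection: $d\in A_n$ gives $u_d(\mu^{DA}_n(d))-\ep<t^n_d-\ep/2$, so any hospital $h$ with $u_d(h)\le u_d(\mu^{DA}_n(d))-\ep$ lies strictly below all of the at least $k_n+M$ hospitals that exceed $t^n_d-\ep/2$, whence $r_d(h)\ge k_n+M+1$. Together with the structural fact, the rank‑order of $\mu^I_n(d)$ in $d$'s interview‑truncated preference is at most $k_n+1\le r_d(h)-M$, which is precisely the asserted ``lower by at least $M$'' bound. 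Apart from the estimate $\beta_n<\pi/2$, the argument is elementary arithmetic with $\ep,\ta,\pi$ and the observation above.
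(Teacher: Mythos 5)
Your proposal follows the same route as the paper's proof: Lee's (2016) concentration of DA payoffs around the target utility $c_d+1$, a count of how many hospitals exceed a doctor's target, and a union bound combining the two, with the rank gap obtained by comparing the at-most-$(k_n+1)$ rank of the Int-DA partner in the truncated preference against the at-least-$(k_n+M+1)$ rank of any hospital falling $\ep$ below the DA utility. Your endgame is in fact slightly tidier than the paper's: you handle unmatched doctors explicitly via the $k_n+1$ bound (the paper only asserts the rank of "any partner" is at most $k_n$), and you build the $+M$ directly into the definition of $B(d,n)$ rather than using the paper's device of replacing $k_n$ by $k_n+M$ at the outset.

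The one place where your write-up stops short of a proof is exactly the step you flag: showing $\beta_n<\pi/2$, i.e., that with high probability all but a $\ta/2$-fraction of doctors have at least $k_n+M$ hospitals above (slightly below) their target utility. This is the only substantive probabilistic estimate in the argument; asserting that "all but a $\ta/4$-fraction of doctors have conditional mean at least $k_n+2M$" is not bookkeeping, it is the content, and it is where both the common component and $\limsup k_n/n<1$ actually enter. The paper closes it as follows: choose $c^\star$ and $\da>0$ with $1-G(c^\star)+\da<\ta/4$ and $p(c^\star):=P(c_h+\eta_{d,h}>c^\star+1)>\limsup k_n/n$; for any doctor with $c_d\le c^\star$, Hoeffding applied to the i.i.d. indicators $\one_{c_h+\eta_{d,h}>c^\star+1}$ gives $P(B(c_d,n))\le\exp\bigl(-2(p(c^\star)-k_n/n)^2 n\bigr)$, and a union bound over the $n$ such doctors makes this smaller than $\pi/4$ for large $n$; a second Hoeffding bound controls the number of doctors with $c_d>c^\star$ (who are simply conceded to $B$) by $(1-G(c^\star)+\da)n<(\ta/4)n$ except with probability $\exp(-2\da^2 n)$. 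If you fill in that computation---your $\ep/2$ slack only enlarges the relevant probability, so the same $c^\star$ works---your argument matches the paper's and is complete.
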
 

\begin{proof}
Note that if $k'_n$ is in the hypotheses of the proposition, so is $k'_n+M$. So replace $k_n$ by $k_n+M$ in the sequel.
With some notational abuse, we drop the multipliers $\la^D$ in $1-\la^D$ and write $c_d$ for $\la^D c_d$, $\eta_{d,h}$ for $(1-\la^D)\eta_{d,h}$, etc. This re-scaling implies that utilities are sums of the common and private value components:  $u^n_d(h) = c_h + \eta_{d,h}$, and  $u^n_h(d) = c_d + \eta_{h,d}$. The relevant probability distributions are re-scaled correspondingly, but remain absolutely continuous, with support on a compact interval in $\Re$. Without loss, we assume that this interval is $[0,1]$.\footnote{In fact, the distributions do not need to have a compact support. It suffices to choose a compact set that accumulates large enough probability.
%, since our statement only needs to hold for a large enough fraction of agents. 
We thank SangMok Lee for this observation.}

Let $D = \cup_n D_n$ and $H=\cup_n H_n$. Consider tuples $(c,\eta)$, with $c=(c_a)_{a\in H\cup D}$ and \[ \eta = ((\eta_{a,b})_{(a,b)\in H\times D},(\eta_{a,b})_{(a,b)\in D\times H}).\] The tuples $(c,d)$ are endowed with the product probability measure from the i.i.d. distributions described above. %Any probabilistic statement refers to this %probability space.

Let $G$ denote the cumulative distribution function corresponding to $c_d$ and fix $\ta,\ep,\pi>0$.  

%[[LEEAT: I DON'T UNDERSTAND THE PARAGRAPH BELOW. MATCHING ASSORTATIVELY BASED ON THE COMMON COMPONENT?]]

To understand how the proof works, note that if agents match assortatively based on the common component, then a doctor $d$ should be able to find a hospital $h$ for which it has idiosyncratic utility close to $1$, and this hospital should provide $d$ with (approximately) the same utility $c_d+1$ as it receives from matching with $d$. Think of $c_d+1$ as $d$'s ``target utility.''

%To this end, let 
Let \[ A_n(\ep,(c,\eta)) =  \{ d\in D_n: c_d+1 - \ep < u_d(\mu^{DA}_n(d)) < c_d + 1 + \ep \}\] be the set of doctors for which this is achieved (in DA), up to $\ep$. We shall prove that, when $n$ is large enough, with large probability, a fraction at least $1-\ta/2$ doctors are in $A_n(\ep,(c,\eta))$. 

%Next, we 
Consider the number of hospitals ranked above a doctor's target utility $c_d+1$. Let $k_n\geq 1$ be a sequence of positive integers such that $\limsup k_n/n < 1$. Let \[B(c_d,n) =  \{ \abs{ h\in H_n : c_h + \eta_{d,h} > c_d+1 } \leq k_n \}.\] be the event that fewer than $k_n$ hospitals give $d$ a utility greater than $d$'s target utility. We denote by $\beta_n$ the probability that a fraction of at least $\theta/2$ doctors have a ``small'' number of at most $k_n$  hospitals above their target utility.

%[[LEEAT: SHOULDN'T WE DEFINE $\beta_n$? I'M NOT SURE SURE WHAT $\beta_n <\pi/2$ MEANS WITHOUT A DEFINITION OF ``small''. THIS IS FOR SOME GIVEN SEQUENCE $k_n$, RIGHT?]]

%Now, we
We shall prove that for $n$ large enough, $\beta_n <\pi/2$ and $P \big(\frac{1}{n} \abs{A_n(\ep,(c,\eta))} \geq 1-\ta/2\big) > 1-\pi/2$. Thus, the event that $B(c_d,n)$ is false for a fraction $\geq 1-\ta/2$ of doctors {\em and} the event $\left(\frac{1}{n} \abs{A_n(\ep,(c,\ep))} \geq 1-\ta/2\right)$ holds, has probability $\geq (1-\pi/2 ) + (1-\pi/2 ) -1 = 1-\pi$. At the intersection of these events, it holds for a fraction $\geq (1-\ta/2) + (1-\ta/2) -1 = 1-\ta$ of $d\in D_n$  that $B(c_d,n)$ is false and $d\in A_n(\ep)$. Hence, for a fraction $\geq 1-\ta$ of $d\in D_n$ there are more than $k_n$ hospitals above their target utility, and they are within $\ep$ of their target utilities.  The rank-order of any partner in $\mu^I$ is at most $k_n$, so these statements prove the proposition. 
% and we are done. 

To finish the proof we carry out the required calculations.
%calculations needed in the above paragraph. To this end,  
Let \[ l = \limsup_{n\rightarrow \infty} \frac{k_n}{n}\] and recall that $l \in [0,1)$ by hypothesis. Choose $c^\star$ and $\da>0$ such that  $1-G(c^\star) + \da < \ta/4$ and $l<P(c_h + \eta_{d,h} > c^\star+1)$. This is possible by absolute continuity of the distributions of $c_h$ and $\eta_{d,h}$. Let $p(c^\star) = P(c_h + \eta_{d,h} > c^\star+1)$. 

If $c_d\leq c^\star$, then
\begin{align}  
P( B(c_d,n)) & =  P( \sum_{h\in H_n} \one_{c_h + \eta_{d,h} > c_d+1} \leq k_n ) \notag \\
& \leq  P\left( \frac{1}{n} \sum_{h\in H_n} \one_{c_h + \eta_{d,h} > c^\star+1} \leq p(c^\star) - (p(c^\star)-\frac{k_n}{n}) \right) \notag \\
& \leq \exp (-2(p(c^\star) - \frac{k_n}{n})^2 n) \label{eq:hoeffd2}
\end{align}
by Hoeffding's inequality (observe that, eventually, $p(c^\star) - \frac{k_n}{n} > 0$).

Let
\begin{align*}
\beta_n & = P\left( \abs{ \{ d\in D_n : B(c_d,n) \} } > n\ta/2  \right) \\
& \leq  P\left( \underbrace{ \abs{\{ d\in D_n : B(d,n) \text{ and } c_d \leq  c^\star \} } }_{Z_n}+ \underbrace{ \abs{ \{ d\in D_n :  c_d > c^\star \} }}_{Y_n} > n \ta/2\right)  \\ 
& \leq  P(\frac{1}{n}Z_n + 1-G(c^\star) + \da > \ta/2) + P(\frac{1}{n}Y_n > 1-G(c^\star) + \da )
\end{align*}
The first inequality follows by counting all $d$ with $c_d> c^\star$ as if $B(d,n)$ were true. So the random variable $Y_n$ counts all $d\in D_n$ with $c_d> c^\star$ as if they were in $B(c_d,n)$.

The second inequality is a truncation exercise, partitioning the probability space into two events. The first event is  $\frac{1}{n}Y_n \leq 1-G(c^\star) + \da$ and the second is $\frac{1}{n}Y_n > 1-G(c^\star) + \da$. Under the second event, we  have  $\frac{1}{n} Z_n + \frac{1}{n} Y_n >\ta/2$ as $1-G(c^\star) +\da>\ta/2$. Under the first event, the inequality is obtained by ``raising'' $\frac{1}{n}Y_n$ to $ 1-G(c^\star) + \da$. 

Applying Hoeffding's inequality again,
\begin{equation}\label{eq:hoeffdforY}
  P (\frac{1}{n}Y_n > 1-G(c^\star) + \da) \leq \exp(-2 \da^2 n).
\end{equation}

Now, 
\begin{align}
P(Z_n  >n(\ta/2 - [1-G(c^\star) + \da)]) & 
\leq  P(\cup_{d\in D_n} B(d,n)) | c_d = c^\star) \notag \\
& \leq \sum_{d\in D_n} P(B(d,n) | c_d = c^\star) \notag \\
& \leq n \exp(-2 (p(c^\star) - \frac{k_n}{n})^2n ), \label{eq:hoeffforZ}
\end{align} 
where the first inequality follows as $n(\ta/2 - (1-G(c^\star) + \da))\geq 1$, and the probability of $B(d,n)$ is maximized by setting $c_d=c^\star$.

Choose $n$ such that 
\begin{align}
n(\ta/2 - [1-G(c^\star) + \da])& > 1, \label{eq:one} \\
\exp(-2 \da^2 n) & < \pi /4, \label{eq:two} \\ 
n\exp (-2(p(c^\star) - \frac{k_n}{n})^2 n) & <\pi/4, \label{eq:three} \\
\text{ and } P \left(\frac{1}{n} \abs{A_n(\ep,(c,\eta))} \geq 1-\ta/2\right) & > 1-\pi/2. \label{eq:SM}
\end{align} 
Observe that \eqref{eq:one} is possible as $\ta/2 - [1-G(c^\star) + \da]>0$. Inequality~\eqref{eq:three} requires that $k$ is $O(n)$, which holds by hypothesis,  and our choice of $c^\star$ to ensure that $p(c^\star) - k_n/n>0$ is eventually bounded below.  Inequality~\eqref{eq:SM} is possible by Theorem~1 of  \cite{lee2016incentive}. 

By \eqref{eq:hoeffdforY},\eqref{eq:hoeffforZ},\eqref{eq:two}, and \eqref{eq:three}, we obtain that 
\begin{equation}
\beta_n \leq  n \exp(-2 (p(c^\star) - \frac{k}{n})^2n ) + \exp(-2 \da^2 n) < \pi/2 \label{eq:bringithome2}
\end{equation}

Statements~\eqref{eq:SM} and \eqref{eq:bringithome2} provide the two bounds needed.
% to complete the proof.
\end{proof}

\pagebreak

\setcounter{page}{1}

\begin{centering}
    {\Large \bf For Online Publication--Appendix: Omitted Proofs}    
\end{centering}

\subsection{Proposition 1}
\begin{proof}
Let $\succeq$ be the common preference that doctors have over hospitals. Note that DA is the same as serial dictatorship (SD) with the order dictated by hospital rank in $\succeq$. 

Consider a doctor $d$ assigned to $h=\mu^{DA}(d)$ in the $r$th round of SD. The rank-order of $h$ in $d$'s preference is therefore $r$. If $k\leq r$ then we are done, as the rank-order of $\mu^I(d)$ in $d$'s truncated preference is at most $k$.  

Suppose that $r< k$. Two observations follow. First, consider the interview stage and a hospital $h=\mu^{DA}(d)$ matched to $d$ in stage $r'< k$ of DA. When choosing whom to interview, $h$ can choose any doctor, as all of them would have received strictly fewer than $k$ interview requests when they get a request from $h$. So the hospital choosing at stage $r'$ of DA will interview the highest $k$ doctors in her preference. 

Second,  $\mu^{DA}(h) = \mu^{I}(h)$ for the hospital $h$ choosing at round $r$.\footnote{Incidentally this may not happen for hospitals choosing at round $r'>k$. It is easy to come up with examples.} This is shown by induction: The statement is obviously true for the highest ranked hospital. Suppose that $\mu^{DA}(h) = \mu^{I}(h)$ for all hospitals choosing at any stage $r'<r$. If $h$ is the $r$-ranked hospital then the set of doctors available to $h$ in the DA stage of Int-DA is $D$, by our first observation, minus the choices of hospitals with rank-order $r'<r$. By the inductive hypothesis the doctors chosen by the hospitals with rank-order $r'<r$ is the same as DA. So the set of available doctors to hospital $h$ is the same in Int-DA as in DA. Thus $\mu^{DA}(h)=\mu^{I}(h)$. 
\end{proof}

\subsection{Proposition~3}
\begin{proof}
Specifically, we show that there are constants $N$, $K$, $K'$, $K''$ and $K'''$ that do not depend on $\ta$ and $\pi$, such that for all  
\[  n\geq \max\{ 
\bar N,   \frac{\ln(\pi/4)}{K}, \frac{\ln(4/\pi)}{2\da^2}, (\frac{\ta}{2} + K')^{-1}, (\frac{12}{\ta})^{4},\left(\frac{\log (1-\frac{\pi}{2})}{\log K''} + 3\right)^4K''',
\} 
\]
the statement in Proposition~\ref{propLM} holds.

The market size in the proof of Proposition~\ref{propLM} is determined from inequalities~\eqref{eq:one}-\eqref{eq:SM}. These are the starting point of the proof. Using the bounds in \cite{lee2016incentive}, these mean that we need to choose  $n$ such that 
\begin{align}
-2 [p(c^\star) - \frac{k_n}{n}]^2 n & \leq \ln (\frac{\pi}{4n})  \label{eq:threebounds} \\
-2 \da^2 n & < \ln(\frac{\pi}{4}), \label{eq:twobounds} \\ 
\frac{1}{\frac{\ta}{2} - (1-G(c^\star) + \da )}  & < n \label{eq:onebounds} \\
\frac{2}{n}\left(\frac{1}{n^{1/4}} - 3 \right) \sqrt{n}\log (n) + \frac{6}{n^{1/4}} & > \frac{\ta}{2} \label{eq:fourbounds} \\
(1- g_n)^{2n^{1/4} - 4 } & \geq 1-\frac{\pi}{2} \label{eq:fivebounds} ,
\end{align}
where $g_n$ is $o(e^{-\sqrt{n}\log n})$

For \eqref{eq:threebounds}, choose $N_0$ and $K_0$ such that if $n\geq N_0$ then $(p(c^\star)-k_n/n)^2 \leq K_0$. This is possible given the hypothesis that $\limsup k_n/n < 1$. Next, let $N_1\geq N_0$ and $K_1$ be such that, for all $n\geq N_1$, $2K_0 n - \ln n \geq K_1 n$. Then we need that 
\begin{equation}
K_2 n \geq \ln(\frac{4}{\pi}) \label{eq:threefinal}    
\end{equation}

For \eqref{eq:twobounds} and \eqref{eq:onebounds}, we have  
\begin{align} 
    n & > \frac{\ln(4/\pi)}{2\da^2} \label{eq:twoboundsfinal}\\
    n & \geq \frac{1}{\frac{\ta}{2} - (1-G(c^\star)) -\da} \label{eq:oneboundsfinal}
\end{align}

For \eqref{eq:fourbounds} we need that 
\begin{align*}
    \frac{2 \sqrt{n} \log n}{n^{5/4}} - \frac{6 \sqrt{n} \log n}{n} + \frac{6}{n^{1/4}} & < \frac{\ta}{2} \\ \Longleftrightarrow 
    \frac{2 \log n}{n^{1/4}}\left(\frac{1}{\sqrt{n}} - \frac{3}{n^{1/4}} \right) + \frac{6}{n^{1/4}} & < \frac{\ta}{2} \\
\end{align*}
Let $N_2\geq N_1$ be such that for all $n\geq N_2$, $\frac{1}{\sqrt{n}} - \frac{3}{n^{1/4}} \leq 0$. Then all we need is that  $\frac{6}{n^{1/4}} < \frac{\ta}{2}$, or that 
\begin{equation}
n\geq (\frac{12}{\ta})^{4}.\label{eq:fourboundsfinal}
\end{equation}

For \eqref{eq:fivebounds}, fix $N_3\geq N_2$ and $K_4$ such that for all $n\geq N_3$ $1-g_n\geq K_4$. So we need to obtain $\log(1-\frac{\pi}{2}))\leq (2n^{1/4} - 3)\log K_3$. That is,
\begin{equation}n\geq \left(\frac{\log (1-\frac{\pi}{2})}{\log K_3} + 3\right)^4\frac{1}{16} \label{eq:fiveboundsfinal}    
\end{equation}
Set $\bar N = N_3$, $K=K_2$ $K'=(1-G(c^\star)) +\da$, $K'' = $ $K'''=1/16$. Then the calculations above correspond to~\eqref{eq:threefinal},~\eqref{eq:twoboundsfinal},~\eqref{eq:oneboundsfinal},~\eqref{eq:fourboundsfinal}, and~\eqref{eq:fiveboundsfinal}.
\end{proof}

\section{For Online Publication--Appendix: Additional Figures}
\setcounter{figure}{0}
\renewcommand{\thefigure}{B.\arabic{figure}}
\begin{figure}[h]
\centering
\includegraphics[width=0.7\textwidth]{./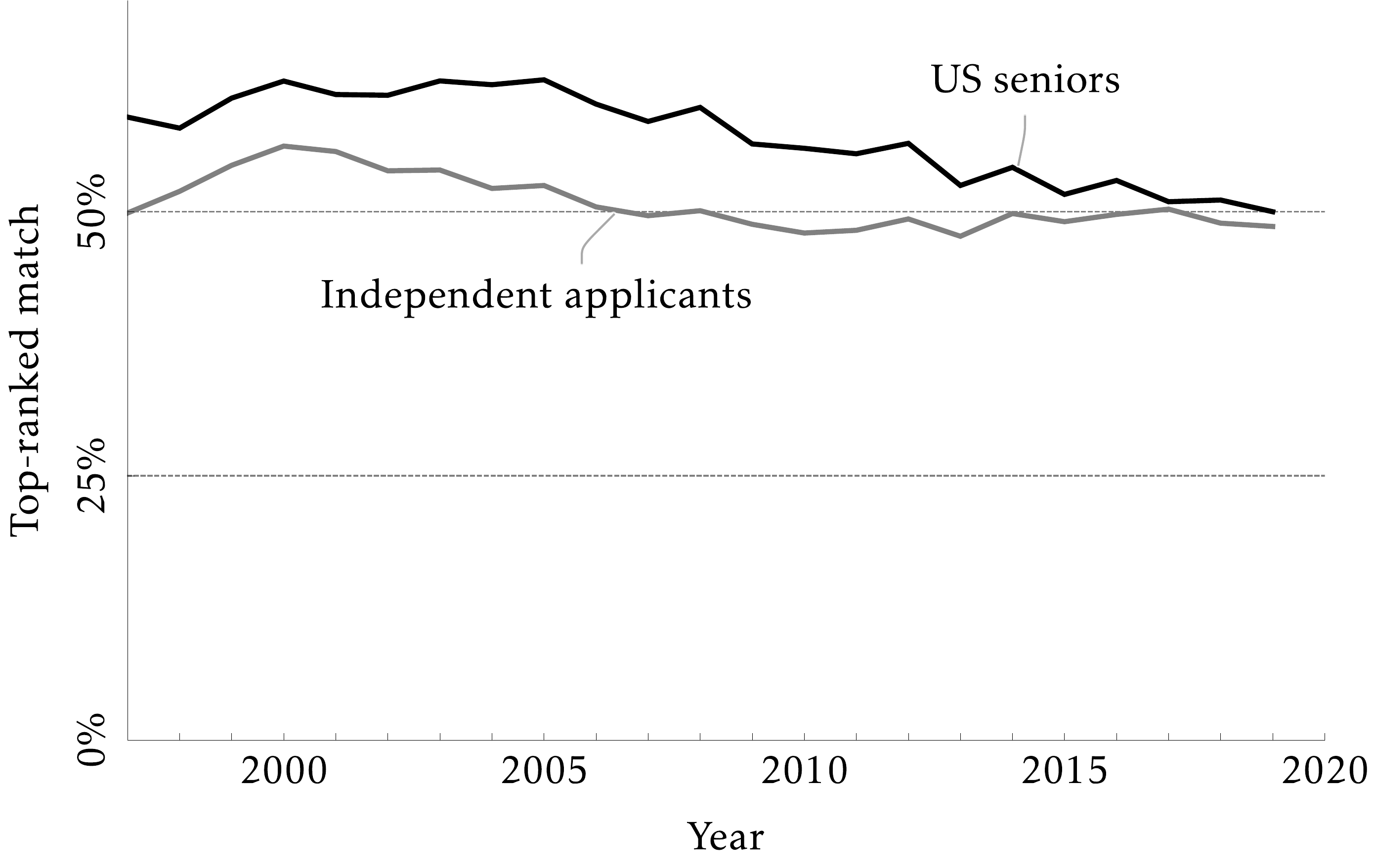}

\caption{NRMP residents matched to first-ranked program (conditional on matching)}
\label{fig:setuppuzzle}
\end{figure}
\end{document}